\theoremstyle{plain}
\newtheorem*{theorem*}{Theorem}
\newtheorem{theorem}{Theorem}
\newcommand{\x}{\boldsymbol{x}}
\newcommand{\U}{\boldsymbol{u}}
\newcommand{\Z}{\boldsymbol{z}}
\begin{document}

%

%

\twocolumn[

\aistatstitle{Faster parallel MCMC: Metropolis adjustment is best served warm}

\aistatsauthor{Jakob Robnik \And Uroš Seljak}

\aistatsaddress{ Physics Department,\\
University of California at Berkeley,\\
Berkeley, CA 94720, USA \And  Physics Department,\\
University of California at Berkeley\\
and Lawrence Berkeley National Laboratory, \\
Berkeley, CA 94720, USA} ]

\begin{abstract}
    Despite the enormous success of Hamiltonian Monte Carlo and related Markov Chain Monte Carlo (MCMC) methods, sampling often still represents the computational bottleneck in scientific applications. Availability of parallel resources can significantly speed up MCMC inference by running a large number of chains in parallel, each collecting a single sample. However, the parallel approach converges slowly  if the chains are not initialized close to the target distribution (cold start). Theoretically this can be resolved by initially running MCMC without Metropolis-Hastings adjustment to quickly converge to the vicinity of the target distribution and then turn on adjustment to achieve fine convergence. However, no practical scheme uses this strategy, due to the difficulty of automatically selecting the step size during the unadjusted phase. We here develop Late Adjusted Parallel Sampler (LAPS), which is precisely such a scheme and is applicable out of the box, all the hyperparameters are selected automatically. LAPS takes advantage of ensemble-based hyperparameter adaptation to estimate the bias at each iteration and converts it to the appropriate step size. We show that LAPS consistently and significantly outperforms ensemble adjusted methods such as MEADS or ChESS and the optimization-based initializer Pathfinder on a variety of standard benchmark problems. LAPS typically achieves two orders of magnitude lower wall-clock time than the corresponding sequential algorithms such as NUTS.
\end{abstract}

\section{Introduction}

Markov Chain Monte Carlo (MCMC) methods \citep{metropolis_equation_1953} are a workhorse of Bayesian statistical inference, with applications ranging from
from economics and social science \cite{gelman_bayesian_1995}, to high energy physics \citep{duane_hybrid_1987} and machine learning \citep{neal_bayesian_1996}.
They are also widely employed for performing high-dimensional integrals in quantum physics \citep{gattringer_quantum_2010}, molecular chemistry \citep{leimkuhler_molecular_2015}, statistical mechanics, and in many other fields.
In all of these applications, the goal of MCMC is to estimate expectation values $\mathbb{E}_{p}[f] = \int f(\x) p(\x) d\x$, where $p(\x)$ is a probability density distribution of parameters $\x \in \mathbb{R}^d$. This is achieved by generating $M$ samples $\{ \x^{m} \}_{m=1}^M$ from the probability density distribution $p(\x)$ and to estimate expectation values of interest as $\mathbb{E}_p[f(\x)] \approx \frac{1}{M} \sum_{m = 1}^M f(\x^{m})$. 
We focus on situation where a (smooth) gradient of the target distribution $\nabla \log p(\x)$ is available, either analytically, or via automatic differentiation \citep{griewank_evaluating_2008}. This is often the case in fields like Bayesian statistics \citep{strumbelj_past_2024, carpenter_stan_2017}, machine learning \citep{baydin_automatic_2018}, lattice quantum problems \citep{gattringer_quantum_2010}, and cosmology \citep{campagne_jax-cosmo_2023}.

Traditionally, MCMC is run in the sequential mode, where one (or a few) long chains are used to generate samples \citep{margossian_for_2024}. The chain first undergoes a burn-in phase where it converges from the initial state to the stationary distribution. After the burn-in, the samples are collected and the expectation value error decays as the square root of the effective sample size, which is the total number of samples, reduced by the sample correlation length \citep{gelman_bayesian_1995}.
However, the sampling phase can be very long in wall-clock time for expensive likelihoods or complicated geometries, which is not practical and can even be prohibitive for a typical scientific workflow.

\paragraph{Parallel sampling.}
Following the recent computing advances enabling accessibility of parallel resources, such as GPUs, TPUs and CPU clusters, a new paradigm has emerged \citep{sountsov_running_2024, hoffman_tuning-free_2022, lao_tfpmcmc_2020}: instead of running a few very long chains, where each chain produces many samples, run a large number of chains in parallel, such that each chain only collects one sample. This strategy is known as ensemble MCMC \citep{hoffman_tuning-free_2022, goodman_ensemble_2010} or many-short-chains regime \citep{margossian_nested_2024}.
In the ensemble approach, it is essentially only the burn-in phase that matters; after it finishes, the distribution of the chain positions is already the stationary distribution, and the error is only determined by the number of chains. However, this approach is sensitive to the initialization of the chains, and can converge slowly if the chains are not initialized close to the target distribution \citep{durmus_asymptotic_2023}.

\paragraph{Late adjustment.} It is theoretically well-understood \citep{altschuler_faster_2024, durmus_asymptotic_2023} that Metropolis-Hastings (MH) adjusted MCMC excels when the initial distribution is close to the target distribution (warm start), but suffers if the initial distribution is far from the stationary distribution (cold start), because the acceptance probability is significantly lower away from the typical set \citep{durmus_asymptotic_2023}. Unadjusted methods (no MH accept-reject step) are complementary in this regard: they are fast to converge from a cold start to the vicinity of the stationary distribution, but struggle to achieve fine convergence, since 
they are biased and need to control the bias by the small step size \citep{bou-rabee_convergence_2023, bou-rabee_mixing_2023, camrud_second_2024}. One can take the best of both worlds by using an unadjusted method to converge from the cold start to the warm start and then use an adjusted method to achieve fine convergence \citep{durmus_asymptotic_2023, altschuler_faster_2024}. This method gives theoretically best mixing time to high accuracy from a cold start to date. However, these developments are theoretical and assume global information about the target that is not available in practice, so it is not immediately clear how to implement this algorithm and in particular, how to select its step size during the unadjusted phase. This will be the focus of the present work. 

\paragraph{Our contributions.} We develop a first practical parallel algorithm that initially performs unadjusted sampling and then uses MH adjustment to achieve fine convergence. The resulting algorithm, termed Late-Adjusted Parallel Sampler (LAPS) can be applied out of the box, the user only needs to provide the target density and the initialization of the chains. We show on standard benchmark problems that it significantly outperforms other turnkey ensemble samplers such as ChESS-HMC \citep{hoffman_adaptive-mcmc_2021} or MEADS \citep{hoffman_tuning-free_2022}, in some cases by more than an order of magnitude. It also achieves two orders of magnitude faster convergence than sequential algorithms such as NUTS \citep{hoffman_no-u-turn_2014} or MCLMC \citep{robnik_microcanonical_2024}. The algorithm is implemented in JAX \citep{jax}, specifically in Blackjax \citep{blackjax} and comes with tutorials on how to use it\footnote{Blackjax: \url{https://blackjax-devs.github.io/blackjax/examples/quickstart.html}.}\footnote{Code for reproducing the results is also public: \url{https://github.com/reubenharry/sampler-benchmarks}}. 
Blackjax can easily be combined with models written in probabilistic programming languages such as Numpyro \citep{phan_composable_2019} or Tensorflow Probability \citep{tfp}.

\section{Background and related work}

\paragraph{Parallel MCMC.} The strategy is to evolve an ensemble of $M$ chains.
Let $(\x^m_t, \U^m_t)$ be the state of $m$-th chain at iteration $t$, where $\x^m_t$ is the position in the parameter space and $\U^m_t$ is the velocity. 
Further let $\rho_t(\x, \U)$ be the underlying ensemble probability density distribution, that is, $(\x^m_t, \U^m_t)$ are exact samples from $\rho_t(\x, \U)$. 
Each chain is an independent Markov process,
\begin{equation}
    (\x_{t+1}^m, \U_{t+1}^m) = \varphi(\x_{t}^m, \U_{t}^m \vert \beta_t),
\end{equation}
where $\varphi$ is an update kernel and $\beta_t$ are hyperparameters of the kernel. These updates can be applied in parallel due to independence.

\paragraph{Choice of the kernel.} State-of-the-art kernels for sampling from differentiable distributions include Hamiltonian Monte Carlo (HMC; \citet{duane_hybrid_1987, neal_mcmc_2011}) and underdamped Langevin Monte Carlo (LMC; \citet{leimkuhler_molecular_2015}). In both, the velocity (or equivalently momentum) is enforced to have a standard Gaussian marginal. Recently, Microcanonical Langevin Monte Carlo (MCLMC; \citet{robnik_microcanonical_2024, robnik_fluctuation_2024}) has been proposed, where the norm of the velocity is instead kept fixed. The proposed benefit is increased stability to large gradients and faster convergence, due to the more deterministic dynamics. Indeed, large improvements over the state-of-the art have also been observed in practice \citep{simon-onfroy_benchmarking_2025, crespi_flinch_2025, sommer_microcanonical_2024}. Speed of convergence is of particular interest in the present work, so we will adopt MCLMC, rather than the more standard HMC or LMC.  We perform an ablation study in Appendix \ref{sec: ablations}.
In ideal, continuous-time MCLMC, the kernel would be chosen in such a way that $(\x_s, \U_s) = \varphi(\x_0, \U_0 \vert s, \eta)$ exactly solves the following stochastic differential equation (SDE):
\begin{align} \label{eq: SDE}
    d \x_s &= \U_s ds\\ \nonumber
    d \U_s &= (I - \U_s \U_s^T) (\nabla \log p(\x_s) / (d-1) ds + \eta d \boldsymbol{W}), 
\end{align}
with initial condition $(\x_0, \U_0)$. The ensemble distribution $\rho_t$ then provably converges to the target distribution \citep{robnik_fluctuation_2024}. Here, $\boldsymbol{W}$ is the Wiener process \citep{oksendal_stochastic_2003} and $\eta$ is a hyperparameter that determines the strength of stochasticity. In the discretized version of Equation \eqref{eq: SDE} (see Equation \eqref{eq: O update}) we will parameterize the strength of stochasticity by a parameter $L$ rather than $\eta$, such that $\eta = \sqrt{2 / L d}$ in the limit of small step size. This is a convenient parameterization because it makes $L$ directly comparable with with the trajectory length in HMC \citep{robnik_fluctuation_2024}. 

\paragraph{Metropolis test.} In practice, we have to numerically construct a kernel that only approximately solves Equation \eqref{eq: SDE} by a splitting scheme \citep{robnik_fluctuation_2024}, as reviewed in Appendix \ref{sec: integrator}. One then has two ways of dealing with the approximation error:
\begin{itemize}
    \item \textbf{Unadjusted:} use this approximate solution as a kernel $\varphi(\cdot \vert \epsilon, L)$ and ensure that the step size $\epsilon$ is sufficiently small to make the approximation error tolerable. 
    
    \item \textbf{Adjusted:} construct the kernel by performing $N$ steps with the approximate solution $\varphi(\cdot \vert \epsilon, L)$ and wrap them by a Metropolis-Hastings accept/reject step, followed by refreshing the velocity from its marginal distribution \citep{robnik_metropolis_2025}. This construction is reviewed in Appendix \ref{sec: MAMS}. It guarantees that the ensemble exactly converges to the target distribution.
\end{itemize}

Unadjusted approach is faster to converge from the cold start to the warm start, while the adjusted approach is faster to achieve fine convergence. LAPS combines the best of both worlds, by switching on adjustment once the unadjusted sampling stops making sufficient progress. 

\paragraph{Hyperparameter adaptation.} In either case, the kernel has two hyperparameters, the step size $\epsilon$ and the effective trajectory length $L$. In adjusted phase, the number of steps per proposal $N$ is an additional hyperparameter. To tune those parameters automatically we will exploit Ensemble Chain Adaptation (ECA; \citet{gilks_adaptive_1994}), the idea that the ensemble averages at iteration $t$, 
\begin{equation} \label{eq: ECA}
    \mathbb{E}_{\rho_t}[f] \approx \frac{1}{M} \sum_{m = 1}^M f(\x^m_t),
\end{equation}
can be used to inform the hyperparameters in iteration $t+1$. This breaks the complete independence of the chains and introduces an asymptotic bias, which is small if the number of chains is large. To eliminate this bias one can either adopt a scheme from \citep{hoffman_tuning-free_2022} or freeze the hyperparameter tuning at some point, as we do here.
Several ECA algorithms have been developed for HMC or LMC, among these ChEES-HMC \citep{hoffman_adaptive-mcmc_2021}, SNAPER \citep{sountsov_focusing_2022}, MEADS \citep{hoffman_tuning-free_2022} and MALT \citep{riou-durand_adaptive_2023}. Another option is NUTS \citep{hoffman_no-u-turn_2014} run in a parallel mode, where each chain is run independently. These samplers are Metropolis adjusted during the entire process and are typically initializd with an optimizer. A popular choice is the Pathfinder algorithm \citep{zhang_pathfinder_2022}, an L-BFGS-based optimizer that is designed to terminate before the optimizer would start moving away from the typical set and towards the mode of the distribution. 
Our main novelty is that we initialize with an unadjusted sampler. 


\section{Unadjusted initialization} \label{sec: u}

We here develop a scheme to determine the hyperparameters during the unadjusted phase of the algorithm. We emphasize that the purpose of this stage is not to guarantee convergence to the target distribution, but only to approach it as fast as possible.

\subsection{Step size} 
Stationary distribution $p_{\epsilon}$ of the unadjusted kernel $\varphi(\cdot \vert \epsilon, L)$ differs from the target distribution $p$. This is called the asymptotic bias. With larger step size we are moving towards the stationary distribution faster, but also the asymptotic bias is larger. We expect the optimal step size to be time-dependent: when we are further away from stationarity, the asymptotic bias is less important, and the advantage of making larger steps prevails. Closer to the stationary distribution, smaller step size will have to be used to achieve fine convergence. 
In Appendix \ref{sec: schedule} we provide theoretical justification for this intuition and demonstrate that the optimal step size schedule should be such that the asymptotic bias $D(p, p_{\epsilon})$ is smaller than current total bias $D(p, \rho_t)$ by some a time-independent factor, $D(p, p_{\epsilon_t}) = C D(p, \rho_t)$. Here $D$ is some metric on the space of distributions. Adopting this prescription in practice presents several challenges:

\paragraph{Total bias.} Computing distances such as Wasserstein distance or the total variation distance is hard, even when the exact samples are given, let alone in the present situation, where samples from $p(\x)$ are the final goal of the algorithm. Instead, we will construct a proxy for the distance, which is based on a certain class of summary statistics, for which the ground truth is known and is independent of $p(\x)$. We will examine the equipartition condition \citep{rover_partition_2023}: the matrix 
\begin{equation}
    V_{i j}(p, \rho_t) = \mathbb{E}_{\rho_t}[- (x_i - \mathbb{E}_{\rho_t}[x_i]) \partial_j \log p(\x) ]
\end{equation}
equals the identity when $\rho_t = p$, as can be seen by the integration by parts:
\begin{align} \nonumber
     V_{ij}(p, p) &= -\int (x_i - \mathbb{E}_p[x_i]) \partial_j [\log p(\x)] p(\x) d\x \\ \nonumber
     &= -\int (x_i - \mathbb{E}_p[x_i]) \partial_j p(\x) d\x \\
     &= \delta_{i j} \int p(\x) d \x = \delta_{i j}.
\end{align}
We use this fact to construct
\begin{equation} \label{eq: equipartition}
    \widetilde{D}(p, \rho_t) = \frac{1}{d} \Tr{(I - V(p, \rho_t)) (I - V(p, \rho_t))^T}.
\end{equation}
$\widetilde{D}$ is not a metric on the space of distributions, because it is only based on certain summary statistics which do not define the full distribution uniquely. Nevertheless, in Appendix \ref{sec: equipartition} we show that it is a divergence on the space of zero-mean Gaussian distributions. 
Crucially, $\widetilde{D}(\rho_t, p)$ can be calculated using the ECA formalism \eqref{eq: ECA} with negligible computational overhead, as shown in Appendix \ref{sec: equipartition}.

\paragraph{Proportionality constant.} The optimal constant $0< C < 1$ is not known and can depend on the problem at hand. In this work we will fix it to $C = 0.025$. We perform ablation studies in Appendix \ref{sec: ablations}, which demonstrate that this value works well for the problems considered here and the performance is stable for a large range of other values.

\paragraph{Asymptotic bias.} The challenge is how to estimate asymptotic bias $D(p, p_{\epsilon})$ without knowing the ground truth or even converging to the asymptotic regime. Dynamics, such as HMC, LMC and MCLMC have a quantity called the energy which does not change with time for the exact solution. A step with the approximate dynamics, however, changes the energy by amount $\Delta(\x, \U \vert p, \epsilon)$. Exact form of $\Delta$ is given in Appendix \ref{sec: integrator}. Crucially, it only depends on $\log p(\x)$ and $\nabla \log p(\x)$, which are both a side product of the kernel update computation, so the energy change can be computed with negligible computational overhead. Of interest will be the variance of the energy change over the ensemble, specifically the quantity Energy Error Variance per Dimension (EEVPD):
\begin{equation}
    \mathrm{EEVPD}(\rho \vert p, \epsilon) = \mathrm{Var}_{\rho(\x, \U)}[\Delta(\x, \U \vert p, \epsilon)] / d,
\end{equation}
where the variance is $\mathrm{Var}_{\rho}[f] = \mathbb{E}_{\rho}[(f - \mathbb{E}_{\rho}[f])^2]$.
EEVPD is a useful quantity, because it provides an upper bound for the asymptotic bias \citep{robnik_black-box_2024} through 
\begin{equation} \label{eq: bias bound}
    \widetilde{D}(p, p_{\epsilon}) \leq F^{-1}(\mathrm{EEVPD}(p_{\epsilon} \vert p, \epsilon)),
\end{equation}
where $F(\widetilde{D}) = 4 \widetilde{D}^{3/2} / (1 + \widetilde{D}^{1/2})^2$ is a monotonically increasing function. This bound has been demonstrated rigorously for Gaussian distributions, but also numerically shown to typically hold for non-Gaussian distributions. 
\citep{robnik_black-box_2024} also show that EEVPD scales roughly as $\mathrm{EEVPD} \propto \epsilon^6$.

In summary: optimal step size schedule should maintain an asymptotic bias that is smaller than the total bias by some fixed constant. We estimate the total bias by the equipartiton condition and the asymptotic bias by the energy error and then select a step size according to the optimal schedule condition.
This results in the following scheme for determining the step size $\epsilon_{t+1}$, given the ensemble at iteration $\rho_t$.
\begin{enumerate}
    \item Calculate $\widetilde{D}(p, \rho_t)$ from Equation \eqref{eq: equipartition}.
    \item Convert it to the desired asymptotic bias by $\widetilde{D}(p, p_{\epsilon_t}) = C \widetilde{D}(p, \rho_t)$.
    \item Convert the desired asymptotic bias to the desired
    $\mathrm{EEVPD} = F(\widetilde{D}(p, p_{\epsilon_t}))$.
    \item Estimate $EEVPD(\rho_t \vert p, \epsilon_t)$ associated with the current step. Equation \eqref{eq: ECA} is used to compute the variance.
    \item Using the approximate scaling of EEVPD with the step size determine the step size that should be used in the next iteration to ensure the desired EEVPD,
    $\epsilon_{t + 1} = \epsilon_t \big(\mathrm{EEVPD} / \mathrm{EEVPD}(\rho_t \vert p, \epsilon_t))^{1/6}$.
\end{enumerate}

Figure \ref{fig:GermanCredit} shows that with this procedure (i) EEVPD succesfully traces its desired value (ii) step size decays which translates to a rapid decay in bias.

\begin{figure*}
    \centering
    \includegraphics[width=\linewidth]{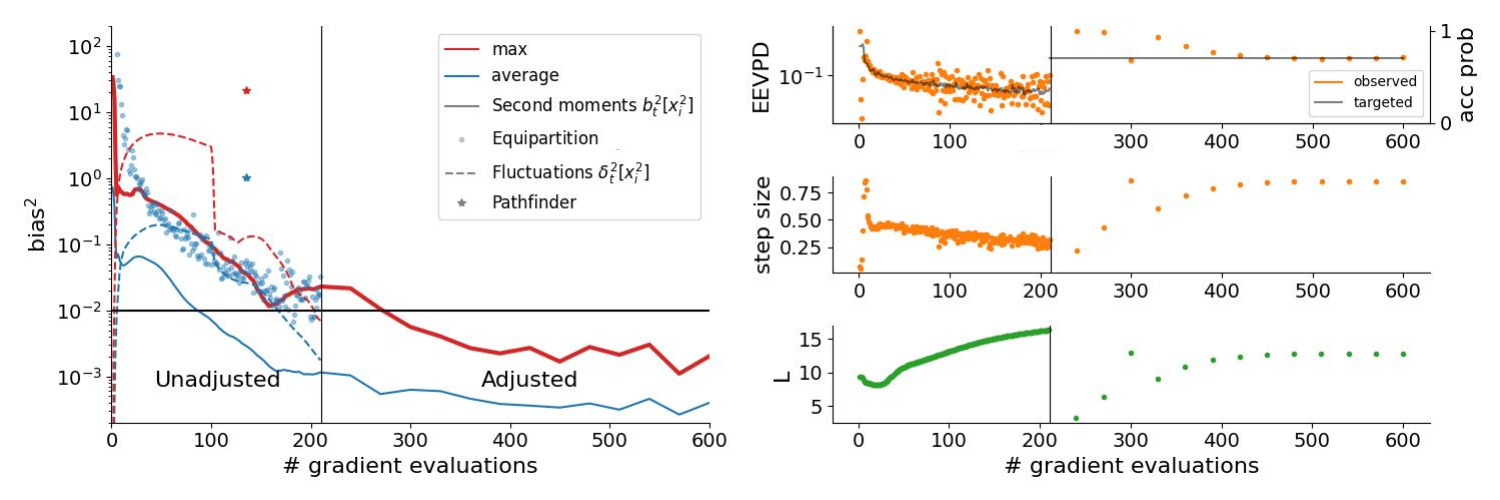}
    \caption{
    LAPS convergence for the German Credit problem. Left panel: bias of the second moments relative to the ground truth, $b^2_{\mathrm{max}}$ (red) and $b^2_{\mathrm{avg}}$ (blue), equipartition loss (blue dots) and the second-moment fluctuations (dotted lines) are shown as a function of the number of gradient evaluations. The latter two can be estimated when the ground truth is not known. In this example, they reflect the actual bias quite well. The switch from the unadjusted to the adjusted phase is shown by a vertical line. In this case, the unadjusted phase already comes very close to the desired accuracy, but adjusted method further improves on that. Note that the final ensemble is not biased, the residual error is caused by the finite number of chains. Pathfinder's second-moment bias is shown with stars.  
    Right panels: convergence of hyperparameters step size and L is shown in the lower two panels. Step size is tuned based on the quantities derived from the energy error, shown in the upper panel. EEVPD is used in the unadjusted phase, acceptance rate in the adjusted phase. Targeted EEVPD and acceptance rate are shown in grey. Acceptance rate is fixed, while targeted EEVPD is adaptively changed, based on the current bias estimate from the equipartition loss. Note that the step size in the adjusted phase converges within a few steps, thanks to the bisection method.
    }
    \label{fig:GermanCredit}
\end{figure*}

\subsection{Momentum decoherence scale} 

Noise added to the dynamics can speed-up mixing by ensuring ergodicity. However, the noise strength should not be too large, so that the dynamics can efficiently explore the posterior. Optimal time scale before the momentum direction decoheres was shown to be comparable to the typical scale of the posterior \citet{robnik_microcanonical_2024}, suggesting the prescription 
\begin{equation} \label{eq: L}
    L_t = \alpha \bigg( \sum_{i = 1}^d \mathrm{Var}_{\rho_t}[ x_i ] \bigg)^{1/2}.
\end{equation}
In \citet{robnik_microcanonical_2024}, $\alpha = 1$ was used as default, while here we find that a larger value $\alpha = 2$ works better, corresponding to the more deterministic dynamics. See the ablation studies in Appendix \ref{sec: ablations}. 

\section{Switch to adjustment} \label{sec: u2a}

Unadjusted sampler with a decaying step size is in principle capable of converging to the exact target distribution. However, we had to resort to several approximations in the previous section, because it is in practice difficult to have control over the asymptotic bias. But even compared to a perfect unadjusted step size schedule, the adjusted are faster to achieve fine convergence. 


We will take a pragmatic approach in determining when to turn on adjustment: we will monitor summary statistics and determine when their values are no longer changing significantly, signaling that the chains are close to stationarity. We define  a relative fluctuation of a summary statistic $\mathbb{E}[f]$ as
\begin{equation} \label{eq: relative fluctuations}
    \delta_t[f] = \sigma_t[f] / \mu_t[f],
\end{equation}
where average and standard deviation are computed in a moving window of length $T$:
\begin{align} 
    \mu_t[f] &= \frac{1}{T}\sum_{s= t-T}^T \mathbb{E}_{\rho_s}[f] \\ \nonumber
    \sigma^2_t[f] &= \frac{1}{T-1}\sum_{s= t-T}^T (\mathbb{E}_{\rho_s}[f] - \mu_t[f])^2 .
\end{align}
These can be implemented in a memory-efficient way by performing an online average, so there is no need to store $T$ states of the chain on each device.

If chains were stationary and in the target distribution, we would expect $\delta[f] = \mathrm{Var}_p[f]^{1/2}  M^{-1/2} / \mathbb{E}_p[f] \sim M^{-1/2}$, so it should be very small if the ensemble is large. A small value of $\delta$ thus indicates that we are approaching equilibrium.
For observables, we will take $f(\x) = x_i^2$ and terminate the unadjusted phase when all $\delta_t[x_i^2]$ fall below a certain threshold. We fix the threshold to $0.01$ and the moving window size to the $20 \%$ of the total sampling time.

\textbf{Diagonal preconditioning} helps to reduce the condition number if the specified target has parameters with different posterior widths, and is thus a common practice in MCMC \citep{sountsov_focusing_2022}.
At the end of the unadjusted phase, we perform diagonal preconditioning by a coordinate transformation $y_i(\x) = x_i / \mathrm{Var}_{\rho_t}[x_i]^{1/2}$.

\section{Adjusted sampling} \label{sec: a}

We now turn on Metropolis adjustment and adapt the hyperparameters to the new regime using ECA. 

\subsection{Step size} 
In the adjusted phase we tune the step size to achieve a desired average acceptance rate \citep{neal_mcmc_2011}. 
Because the number of chains is large, the average acceptance rate estimate has a very small noise. As a consequence, stochastic optimization routines, such as dual averaging \citep{hoffman_no-u-turn_2014} are not needed. Instead, the acceptance rate as a function of step size $a(\epsilon)$ is practically a monotonically decreasing function and we can simply use bisection to find the root of $a(\epsilon) - a_{\mathrm{targeted}}$. First, we double (halve) the step size in each iteration if the initial step size is too small (too large) until we have a bracketing interval, i.e. two values of $\epsilon$ that bracket the root. We then run bisection until the acceptance rate equals the desired acceptance rate, up to some tolerance ($3\%$ by default). Thereafter the hyperparameters are fixed, which eliminates the bias caused by ECA (which is in typically unobservably small).
Figure \ref{fig:GermanCredit} illustrates that this procedure converges within a few steps. In our experiments, bisection was faster and more stable than dual averaging from \citet{hoffman_no-u-turn_2014}.
Extending the argument from \citet{beskos_optimal_2013} and \citet{neal_mcmc_2011}, \citet{betancourt_optimizing_2015} argues that the optimal acceptance rate for product targets with HMC is in the range $60\%-80 \%$ for second-order integrators and $80\%-87\%$ for fourth-order integrators. The same argument applies for MCLMC \citep{robnik_metropolis_2025}. Slightly larger values of acceptance rate are typically more stable numerically \citep{phan_composable_2019}, so we default the target acceptance rate to $70 \%$ for second-order integrators and $90 \%$ for fourth-order integrators. Ablation studies in Appendix \ref{sec: ablations} demonstrate that these choices have marginal impact on the performance.

\subsection{Momentum decoherence scale} 
We do not adapt the trajectory length in this phase, because the phase is typically quite short. We fix 15 integration steps per trajectory, i.e. trajectory length in between full refreshments is $L_{\mathrm{full}} =  15 \epsilon$. We set the $L$ in partial refreshments from Equation \eqref{eq: O update} to $L = 1.25 \, L_{\mathrm{full}}$, as recommended in \citet{riou-durand_metropolis_2023, riou-durand_adaptive_2023}.

\section{Experiments} \label{sec: experiments}

We test LAPS on several benchmark problems, taken from the Inference gym \citep{inferencegym2020}:
\begin{itemize}
    \item Banana 
    is a two-dimensional, banana shaped target, shown in Figure \ref{fig:Banana}.
    \item Ill conditioned Gaussian 
    is as 100-dimensional Gaussian with randomly oriented covariance matrix and eigenvalues drawn from the gamma distribution. Its condition number is around $10^5$. 
    \item German credit sparse logistic regression (GC) 
    is a 51-dimensional Bayesian hierarchical model, where logistic regression is used to model the approval of the credit based on the information about the applicant. 
    \item Brownian Motion (Brownian)
    is a 32-dimensional hierarchical problem, where Brownian motion with unknown innovation noise is fitted to the noisy and partially missing data. 
    \item Item Response theory (IRT)
    is a 501-dimensional hierarchical problem where students' ability is inferred, given the test results.
    \item Stochastic Volatility (SV)
    is a 2519-dimensional hierarchical non-Gaussian random walk fit to the S\&P500 returns data. 
\end{itemize}

We follow \citet{hoffman_tuning-free_2022} and define the bias of the expectation value $\mathbb{E}[f]$ as
\begin{equation} \label{eq:bias}
    b^2_t[f]= \frac{(\mathbb{E}_{\rho_t}[f] - \mathbb{E}_{p}[f])^2}{\mathrm{Var}_p[f]},
\end{equation}
where ground truth expectation values $\mathbb{E}_{p}[f]$ and $\mathrm{Var}_p[f] = \mathbb{E}_{p} [(f(\x) - \mathbb{E}_{p}[f])^2]$ are computed analytically for the Ill Conditioned Gaussian and by very long sequential NUTS runs for the other targets.
As a measure of convergence we will either take the largest or the average second-moment bias across parameters:
\begin{equation}\label{eq: def max avg}
    b^2_{\mathrm{max}, t} \equiv \max_{1 \leq i \leq d} b^2_t[x_i^2] \qquad b^2_{\mathrm{avg}, t} \equiv \frac{1}{d} \sum_{i = 1}^{d} b^2_t [x_i^2] .
\end{equation}
Both biases are of interest in applications and have been used as the posterior convergence metric. We use the max definition to directly compare with the ensemble algorithms from \citep{hoffman_tuning-free_2022} and the average definition to directly compare with sequential algorithms from \citet{robnik_microcanonical_2024}.

\begin{figure}
    \centering
    \includegraphics[width=\linewidth]{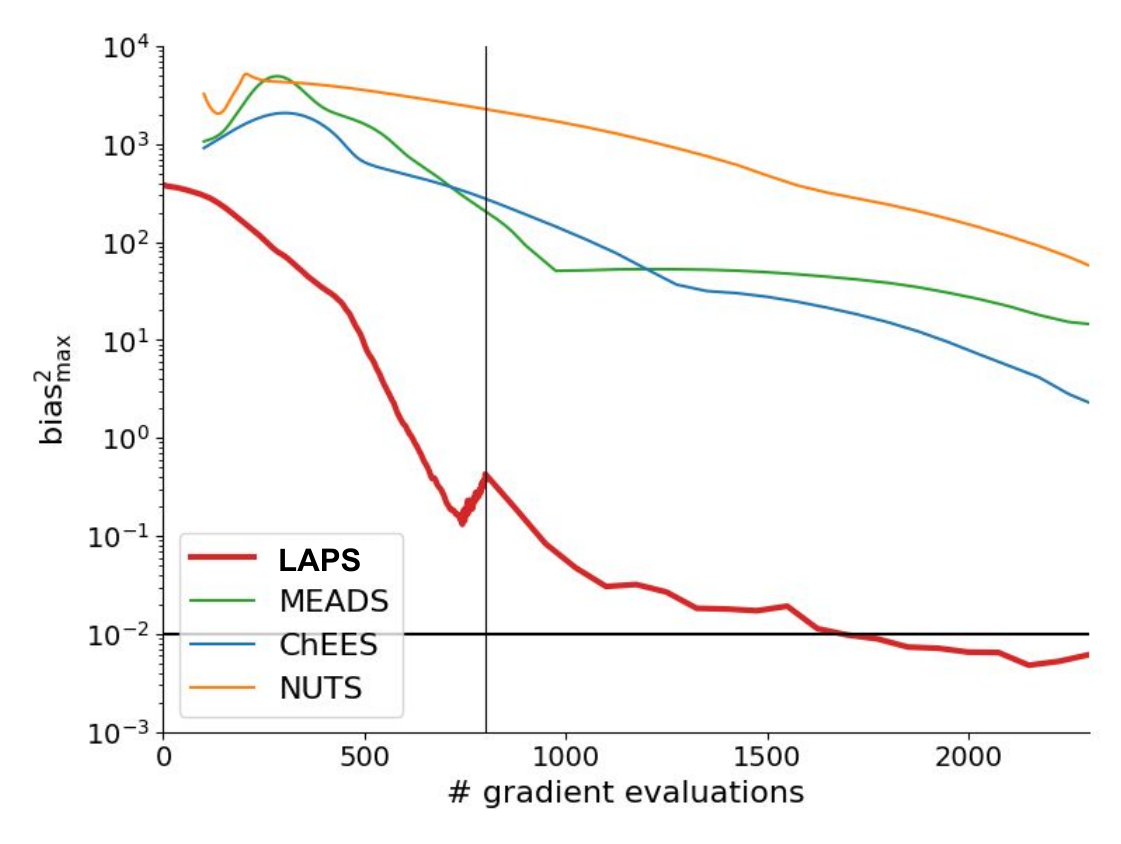}
    \caption{Second-moment bias as a function of the number of gradient evaluations on the Stochastic Volatility problem. LAPS converges significantly faster than the state-of-the-art ensemble samplers, which are shown for comparison.
    Note that in this case, the unadjusted phase does not quite achieve the desired accuracy requirements and the adjusted phase is essential for efficient convergence.}
    \label{fig:StochasticVolatility}
\end{figure}

\begin{table}[t]
\caption{Grads to low max bias ($b_{\mathrm{max}}^2 = 0.01$), normalized per chain, lower is better. LAPS is compared to the state-of-the-art ensemble algorithms. LAPS is the best performing algorithm in all cases, by factors between 2 to 20.}
\label{table}
\vskip 0.15in
\begin{center}
\begin{small}
\begin{sc}
\begin{tabular}{lccccr}
    \toprule & NUTS & ChEES & MEADS & LAPS \\\midrule
    Banana & 320 & 264 & 390 & {\bf 17}\\ 
    Gaussian & 9846 & 5138 & 5520 & {\bf 308}\\ 
    GC & 2716 & 1168 & 610 & {\bf 300}\\
    Brownian & 2159 & 600 & 410 & {\bf 206}\\
    IRT & 1162 & 537 & 790 & {\bf 185} \\
    SV & 4173 & 3080 & 2860$^\star$ & {\bf 1325}\\
\bottomrule
\end{tabular}
\end{sc}
\end{small}
\end{center}
\vskip -0.02in
\footnotesize{$^\star$ This is the number reported in Table 2 of \citet{hoffman_tuning-free_2022}. It is inconsistent with their Figure 7, which suggests it should be around 4200. Our Figure \ref{fig:StochasticVolatility} is based on their Figure 7.}
\end{table}

\begin{table}[t]
\caption{Grads to low average bias ($b_{\mathrm{avg}}^2 = 0.01$). LAPS with 256 chains is compared to the state-of-the-art sequential algorithms. LAPS (total) counts gradient calls from all the chains, LAPS counts gradient calls per chain. If the computation is not parallelized across the chains, the runtime is determined by the total number of gradient calls and the sequential algorithms are typically faster. 
However, if the computation is parallelized, the number of gradient calls per chain determines the runtime and LAPS is significantly faster.
}
\label{table: sequential}
\vskip 0.15in
\begin{center}
\begin{small}
\begin{sc}
\begin{tabular}{lccccr}
  \toprule & NUTS & MCLMC & LAPS & LAPS \\
  &(seq.) & (seq.) & (total) & \\
  \midrule
    Banana & 64,200 & 5000 & 4400 & 17 \\ 
    Gaussian & 113,900 & 33,500& 68,600 & 270\\ 
    GC  & 10,400 & 4200 & 28,700 & 110 \\
    Brownian & 5400 & 1700 & 17,900 & 70 \\
    IRT & 7300 & 1600 & 21,000 & 80 \\
    SV  & 29,900 & 10,000 & 281,600 & 1100 \\
\bottomrule
\end{tabular}
\end{sc}
\end{small}
\end{center}
\vskip -0.1in
\end{table}
In typical applications, computing the gradients $\nabla \log p(\x)$ dominates the total sampling cost, so we take the number of gradient evaluations as a proxy of wall-clock time. 
As in \citep{hoffman_tuning-free_2022}, we measure a sampler's performance as the number of gradient calls $n$ needed to achieve low error, $b^2_{\mathrm{max}} < 0.01$. 

In table \ref{table}, we compare LAPS to NUTS, ChEES-HMC, and MEADS as implemented in \citet{hoffman_tuning-free_2022}. Being MH adjusted, these samplers initially converge slowly, which these address by first running ADAM for 100 steps \citep{hoffman_tuning-free_2022}. We add these 100 gradient calls to the total cost of sampling in Table \ref{table} and Figure \ref{fig:StochasticVolatility}. Note that ADAM initialization can be very far from the truth for hierarchical parameters. All samplers in this test use $M = 4096$ chains. In Appendix \ref{sec: ablations} we show that LAPS performance does not degrade if a smaller number of chains is used, as long as the number of chains is larger than $128-256$.
 
\begin{figure*}
    \centering
    \includegraphics[width=0.8\linewidth]{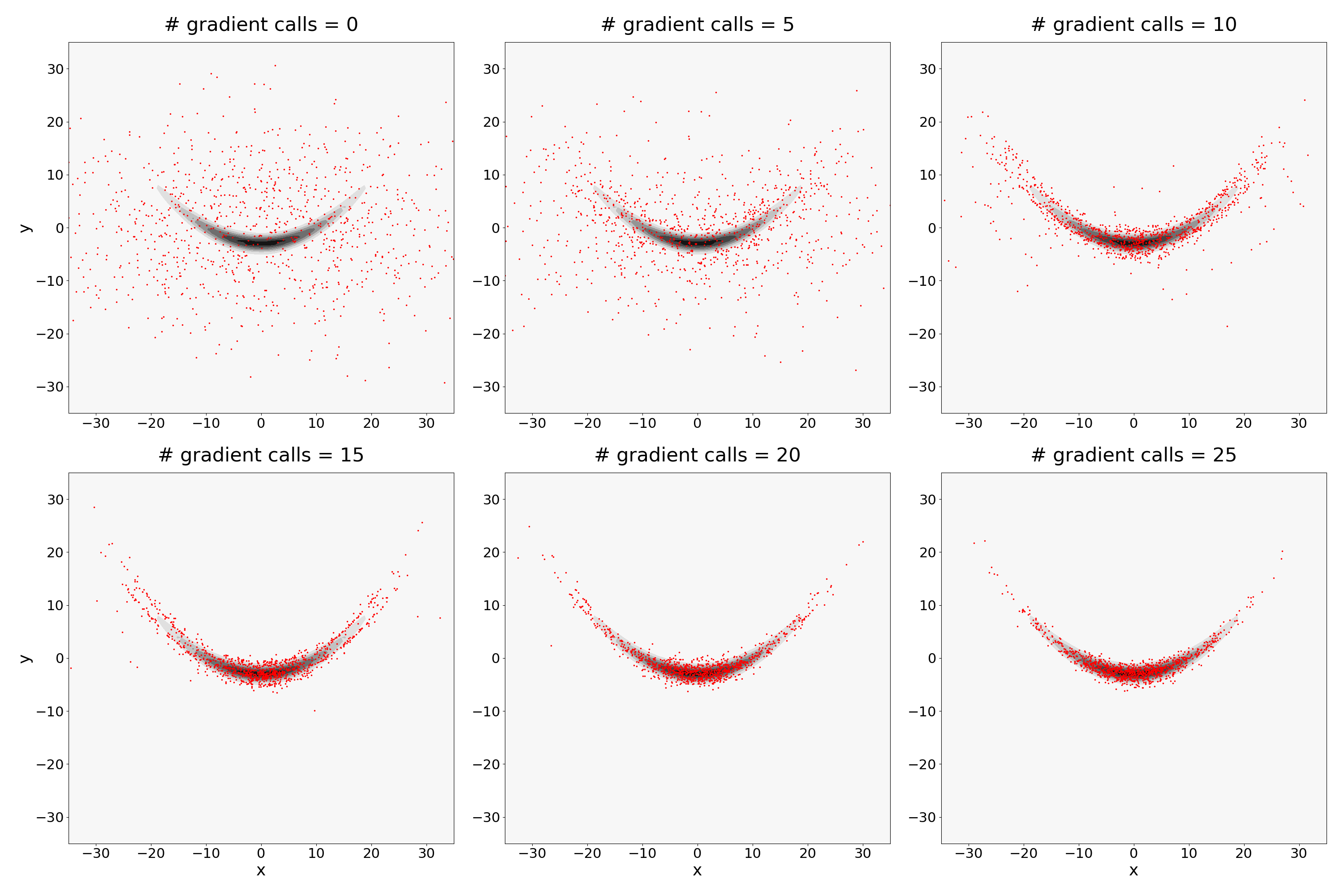}
    \caption{LAPS posterior on the Banana target, as a function of the number of gradient calls (per chain). The location of chains is shown in red, the target density in black. As can be seen, LAPS converges to the posterior in just around 20 gradient calls.}
    \label{fig:Banana}
\end{figure*}

LAPS outperforms other samplers on all benchmark problems, in some cases by an order of magnitude.
Convergence of hyperparameters in LAPS is shown in Figure \ref{fig:GermanCredit} and \ref{fig:StochasticVolatility}, where other samplers also also shown for comparison.
We also compared against the ensemble version of Pathfinder (Algorithm 2 in \citet{zhang_pathfinder_2022}). 
Pathfinder failed to converge for most of our benchmark problems. It only converged for the 2d Banana problem, where it reached
$b^2_{\mathrm{max}} = 0.02$ in 67 gradient evaluations, and for the German Credit problem, where it reached $b^2_{\mathrm{max}} = 21.5$ in 135 gradient evaluations, as shown in Figure \ref{fig:GermanCredit}. Thus, even when it does not fail completely, the unadjusted phase of LAPS obtains a better approximation to the posterior significantly faster.

We also compare LAPS to sequential algorithms, NUTS, and MCLMC. Two quantities of interest are the total number of gradient calls to low bias (which is proportional to the total number of CPU/GPU hours) and the number of gradient calls per chain (which is proportional to the wall-clock time). In this setting, we use $M=256$ chains for LAPS. The results are shown in Table \ref{table: sequential}. In terms of the total number of gradient calls, sequential algorithms are faster, but in some cases the difference is not large, demonstrating that in these cases MCLMC cold start burn-in cost is comparable to the cost of achieving an independent sample from a warm start. However, in terms of wall-clock time, LAPS is typically one to two orders of magnitude faster than sequential MCLMC and two to three orders of magnitude faster than NUTS.

\section{Conclusions}

Parallel MCMC is gaining popularity in the recent years due to the increased availability of parallel resources such as CPU clusters, GPUs and TPUs. It is especially attractive when likelihood evaluations are time consuming, as it can significantly reduce the wall-clock time. However, its cost can easily be dominated by the speed of convergence from the cold start.

We have formulated the Late Adjusted Parallel Sampler, a scheme that initializes sampling with an unadjusted sampler. The key challenge of unadjusted sampling is determining the step size that ensures small asymptotic bias. We have argued that the optimal step size should be time dependent: initially it should be large and become smaller as the chains approach the target distribution and better precision is needed. We have devised a scheme that achieves this by monitoring the equipartition condition, as a proxy for convergence, and appropriately adapting the step size. 
We have demonstrated on standard benchmark problems that the resulting algorithm outperforms the state of the art parallel samplers by up to an order of magnitude.

\paragraph{Robustness}
Construction of the unadjusted phase of the algorithm is inspired by the results that are largely based on the Gaussian assumption, which is a clear limitation.
For example, it is possible to construct pathological cases where $\widetilde{D}(p, \rho_t)$
is zero (or very small) but the distributions are not close. This would cause the algorithm to use a step size that is too short and thus progress slowly. On the other hand we have observed that for the distributions with long tails (such as the Banana example) the $\widetilde{D}$
is an overly stringent metric and even for exact i.i.d. samples decays only slowly with the number of samples. This causes the step size to become too large and the ensemble gets stuck, limited by the discretization bias. The result of both of these failure modes is that the ensemble reaches a stationary distribution which is not equal to the target distribution. However, since the expectation values settle at this point in both cases, the adjusted phase, which does not rely on these assumptions is triggered. We find that this makes the scheme robust, i.e. in the worst case, the adjusted phase is triggered.

Similarly, EEVPD $\propto \epsilon^6$ is not required to hold exactly for the algorithm to work, important is the feedback loop which makes sure that the step size is reduced if the EEVPD is larger than expected and to increase if it is smaller than expected, so the more broadly valid assumption of monotonic growth of EEVPD with the stepsize is sufficient. The upper left panel of Figure \ref{fig:GermanCredit} shows that this feedback loop works well for a non-Gaussian distribution and demonstrates that the observed EEVPD (orange) tracks its desired value (grey). In general, we have tested the algorithm on various non-Gaussian distributions and the performance and the stability of quantities like the EEVPD indicate that the algorithm is robust to the assumptions.

\paragraph{Future work} 
Several parallel methods such as sequential Monte Carlo, annealing and parallel tempering approach the problem of multi-modal distributions. These approaches are agnostic with respect to the kernel being used, so they are compatible with LAPS. For example in annealing, one could run LAPS at each temperature level, to bring the distribution of particles from the previous temperature level to the new stationary distribution as quickly as possible.

Equipartition was used in LAPS as a convergence metric, but this is not the only option. One alternative is the split-$\hat{R}$ statistics \citep{margossian_nested_2024}, which could also be useful for determining when to switch on adjustment. 

Finally, extensive testing on a variety of real-world problems with heavy tails, multimodality or complex geometry will help establish the LAPS strategy in the broader context.



\subsection*{Acknowledgements}
We thank Reuben Cohn-Gordon for many discussions, comments on the manuscript and help with Blackjax. We thank Nawaf Bou-Rabee, Bob Carpenter and Siddharth Mitra for useful discussions. 
This material is based upon work supported in part by the Heising-Simons Foundation grant 2021-
3282 and by NSF CSSI grant award number 2311559.

\bibliography{references,references2}

\appendix
\thispagestyle{empty}

\onecolumn
\aistatstitle{Supplementary Materials}

\section{Optimal step size schedule} \label{sec: schedule}

In this section we provide theoretical guidance for determining the step size schedule in the unadjusted phase of the algorithm. We here assume that the trajectory length $L$ is fixed and will only consider the schedule for the step size $\epsilon$. Let's denote by $K_{\epsilon}$ the transition operator corresponding to the Markov kernel $\varphi(\cdot \vert \epsilon)$. We have dropped the dependence on $L$ for shorter notation. Let's denote the stationary distribution of $K_{\epsilon}$ by $p_{\epsilon}$, that is, $K_{\epsilon} \, p_{\epsilon} = p_{\epsilon}$. We construct the Markov chain by applying the Markov kernel with step size $\epsilon_{t}$ in iteration $t$, such that the chain distribution evolves as $\rho_{t+1} = K_{\epsilon_{t}} \rho_t$.
The core of the ECA framework is that $\epsilon_{t}$ may be informed by $\rho_t$.

We will quantify convergence as a distance $D(\rho_t, p)$ between the chain distribution and the target distribution. Here, $D$ is a metric on the space of distributions, for example, the Wasserstein metric or the total variation distance. Since each kernel $K_{\epsilon_t}$ application performs $T / \epsilon_t$ steps, we define the per-step rate of convergence as
\begin{equation}
    R_t = \frac{D(\rho_t, p) - D(\rho_{t+1}, p)}{T / \epsilon_t}.
\end{equation}

In this section we will make the following assumptions about the Markov kernel and its asymptotic bias:
\begin{itemize}
    \item \textbf{Assumption 1.} Asymptotic bias can be bounded by
    \begin{equation} \label{A1}
        D(p, p_{\epsilon}) \leq c_{asym} \epsilon^{\kappa} \tag{A1},
    \end{equation}
    where the constant $c_{asym} > 0$ and order $\kappa > 0$ do not depend on the step size.

    \item \textbf{Assumption 2.} Dynamics is contracting towards the biased limit:
    \begin{equation} \label{A2}
        D(K_{\epsilon} \rho, p_{\epsilon}) \leq (1-c) D(\rho, p_{\epsilon}), \tag{A2}
    \end{equation}
    for any distribution $\rho$ and step size $\epsilon$. Contraction constant $0 < c < 1$ does not depend on the step size.
\end{itemize}

For example, for unadjusted HMC with the leapfrog integrator and the target distribution that is log-convex ($ (-\nabla \log p(\x) + \nabla \log p(\boldsymbol{y})) \cdot (\x - \boldsymbol{y}) \geq \Vert \x - \boldsymbol{y} \Vert^2$) and has bounded second and third derivatives ($\Vert \nabla^2 \log p(\x)\Vert \leq M_2$ and $\Vert \nabla^3 \log p(\x) \Vert \leq M_3 $), these assumptions hold with $\kappa = 2$, $c_{asym} = C(M_2, M_3, L) D / c$, $D = d + \int \Vert \x \Vert p(\x) d\x + \int \Vert \x \Vert^2 p(\x) d\x$ and $c = m L^2 / 4$.

We have the following result:
\begin{theorem}
    Given Assumptions \ref{A1} and \ref{A2} and a step size schedule, which is recursively constructed in such a way that it satisfies
    \begin{equation} \label{eq: schedule}
        c_{asym} \epsilon_{t}^{\kappa} = \frac{c}{(2-c)(\kappa+1)} D(\rho_t, p),
    \end{equation}
    the chain is guaranteed to converge to the target distribution $p$:
    \begin{equation}\label{eq: rate}
        D(\rho_{t+1}, p) \leq  (1 - c \frac{\kappa}{\kappa+1}) D(\rho_t, p) .
    \end{equation}
    This schedule is optimal in the sense that it maximizes the lower bound for the rate of convergence $R_t$.
\end{theorem}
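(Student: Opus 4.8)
The plan is to control $D(\rho_{t+1}, p)$ by routing through the biased stationary distribution $p_{\epsilon_t}$ and then substituting the prescribed schedule. First I would apply the triangle inequality twice: $D(\rho_{t+1}, p) \leq D(\rho_{t+1}, p_{\epsilon_t}) + D(p_{\epsilon_t}, p)$ and $D(\rho_t, p_{\epsilon_t}) \leq D(\rho_t, p) + D(p, p_{\epsilon_t})$. Since $\rho_{t+1} = K_{\epsilon_t}\rho_t$, Assumption \ref{A2} gives $D(\rho_{t+1}, p_{\epsilon_t}) \leq (1-c) D(\rho_t, p_{\epsilon_t})$, and combining the three inequalities yields $D(\rho_{t+1}, p) \leq (1-c) D(\rho_t, p) + (2-c) D(p, p_{\epsilon_t})$. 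Assumption \ref{A1} then bounds the last term by $(2-c) c_{asym}\epsilon_t^{\kappa}$.

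Next I would plug in the schedule \eqref{eq: schedule}, which is chosen precisely so that $(2-c) c_{asym}\epsilon_t^{\kappa} = \frac{c}{\kappa+1} D(\rho_t, p)$. This collapses the bound to $D(\rho_{t+1}, p) \leq \big(1 - c + \tfrac{c}{\kappa+1}\big) D(\rho_t, p) = \big(1 - c\tfrac{\kappa}{\kappa+1}\big) D(\rho_t, p)$, which is the claimed rate \eqref{eq: rate}. Because $0 < c < 1$ and $\kappa > 0$, the factor $1 - c\frac{\kappa}{\kappa+1}$ lies strictly in $(0,1)$, so iterating gives $D(\rho_t, p) \to 0$ geometrically. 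I would also note that $\epsilon_t$ is well defined: solving \eqref{eq: schedule} gives $\epsilon_t = \big(\tfrac{c}{c_{asym}(2-c)(\kappa+1)} D(\rho_t,p)\big)^{1/\kappa}$, which makes sense since the bracket is nonnegative and $c_{asym},\kappa>0$.

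For the optimality claim I would extract from the same chain of inequalities a lower bound on the one-step decrease, $D(\rho_t, p) - D(\rho_{t+1}, p) \geq c D(\rho_t, p) - (2-c) c_{asym}\epsilon_t^{\kappa}$, and divide by the number of integrator steps $T/\epsilon_t$ to obtain a lower bound on $R_t$ of the form $g(\epsilon_t) := \tfrac{1}{T}\big(c D(\rho_t,p)\,\epsilon_t - (2-c) c_{asym}\,\epsilon_t^{\kappa+1}\big)$. Maximizing $g$ over $\epsilon_t>0$, the stationarity condition $g'(\epsilon)=0$ reads $(2-c) c_{asym}(\kappa+1)\epsilon^{\kappa} = c D(\rho_t, p)$, i.e. exactly \eqref{eq: schedule}, and $g''(\epsilon) = -\tfrac{1}{T}(2-c)c_{asym}(\kappa+1)\kappa\,\epsilon^{\kappa-1} < 0$ confirms this stationary point is the maximizer; one checks $g(\epsilon_t^\star) = \tfrac{\kappa}{\kappa+1}\cdot\tfrac{c D(\rho_t,p)\epsilon_t^\star}{T} > 0$, consistent with the geometric rate above.

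I do not expect a serious obstacle: the argument is a short sequence of triangle inequalities followed by a one-variable optimization. The only point needing care is the interpretation of "lower bound for the rate of convergence" — the optimality is with respect to the worst-case bound implied by \ref{A1}--\ref{A2}, not the true (uncomputable) rate, and the schedule itself depends on the in-principle-inaccessible quantity $D(\rho_t, p)$, which is exactly why the main text later substitutes the equipartition proxy. I would make this interpretation explicit before carrying out the optimization.
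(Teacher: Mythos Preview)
Your proposal is correct and follows essentially the same route as the paper: the same two triangle-inequality applications routed through $p_{\epsilon_t}$, the same use of \ref{A2} then \ref{A1} to reach $D(\rho_{t+1},p)\le (1-c)D(\rho_t,p)+(2-c)c_{asym}\epsilon_t^{\kappa}$, and the same one-variable optimization of the resulting lower bound on $R_t$ to recover \eqref{eq: schedule}. Your added checks (well-definedness of $\epsilon_t$, the second-derivative sign, and the caveat about ``optimal with respect to the worst-case bound'') are useful elaborations but do not change the argument.
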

\begin{proof}
Using Assumption \ref{A2} and applying triangle inequality twice gives
\begin{align*}
    D(\rho_{t+1}, p) &\leq D(\rho_{t+1}, p_{\epsilon}) + D(p, p_{\epsilon}) \\ &\leq (1-c) D(\rho_t, p_{\epsilon}) + D(p, p_{\epsilon}) \\ &\leq (1-c) D(\rho_t, p) + (2-c) D(p, p_{\epsilon}).
\end{align*}
We note that a sufficient condition for the convergent chain is $D(p, p_{\epsilon}) < \frac{c}{2-c} D(\rho_t, p)$. Applying Assumption \ref{A1} and inserting the step size schedule \eqref{eq: schedule} gives
\begin{align*}
    D(\rho_{t+1}, p) &\leq (1-c) D(\rho_t, p) + (2-c) c_{asym} \epsilon_t^{\kappa} \\
    &\leq (1 - c \frac{\kappa}{\kappa+1}) D(\rho_t, p).
\end{align*}
The rate of convergence is lower bounded by
\begin{equation*}
    R_t \geq \frac{\epsilon_t}{T} \big( c D(\rho_t, p) - (2-c) c_{asym}\epsilon_t^{\kappa} \big).
\end{equation*}
The lower bound is maximized for the unique $\epsilon_t$ for which
\begin{equation*}
    \frac{\partial R_t}{ \partial \epsilon_t} = \frac{1}{T} \big( c D(\rho_t, p) + (2-c) (\kappa+1) c_{asym} \epsilon_t^{\kappa}\big) = 0,
\end{equation*}
which gives the schedule \eqref{eq: schedule}. 
\end{proof}

This result suggests that the optimal step size schedule should control the asymptotic bias to be by a constant factor smaller than the total bias.

\paragraph{Numerical illustration} In Figure \ref{fig: stepsize} we compare the adaptive step size used in LAPS with the fixed step size strategy, all other elements being equal. It can be seen that the adaptive step size schedule outperforms any fixed given fixed step size: large steps are initially fast, but soon reach the asymptotic bias, while small step sizes are slow to converge.

\begin{figure}
    \centering
    \includegraphics[width=0.8\linewidth]{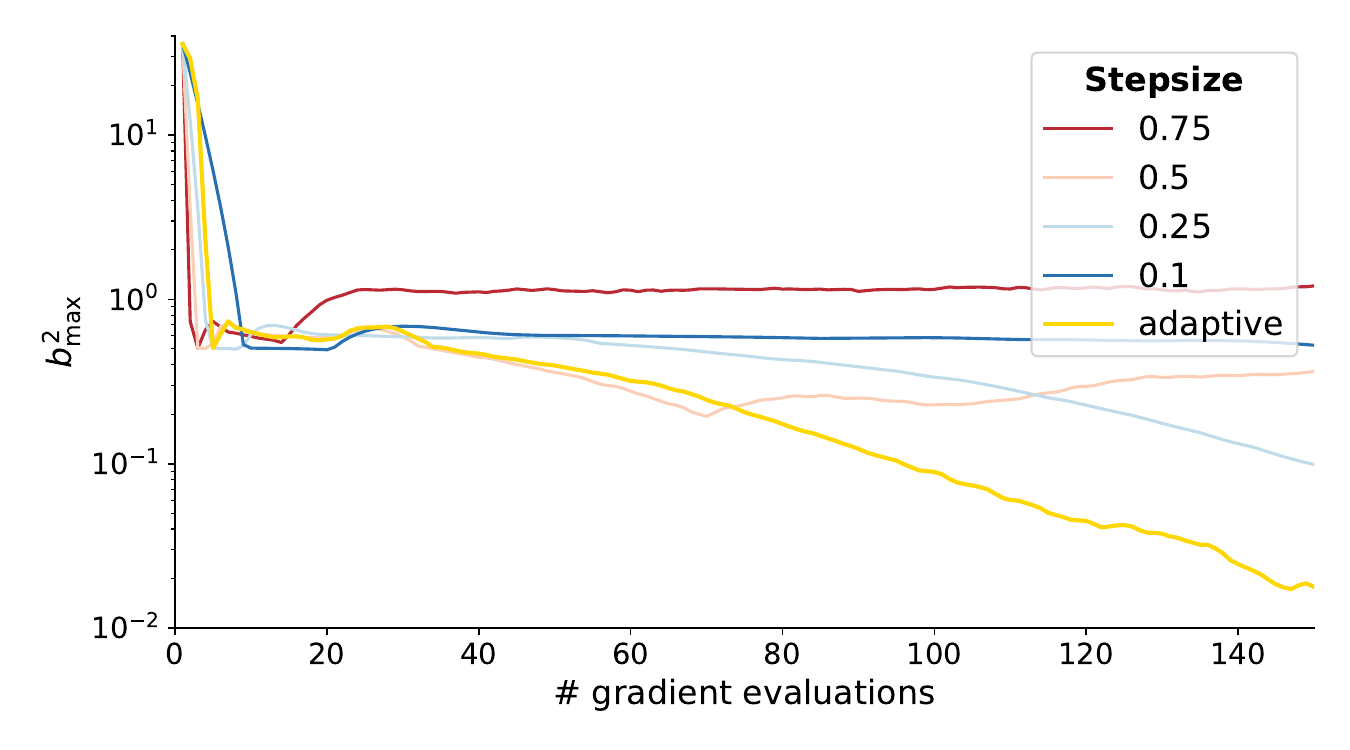}
    \caption{Bias as a function of the number of gradient evaluations with the unadjusted MCLMC dynamics for the German Credit problem. Adaptive step size adaptation used in LAPS (yellow) is compared with a fixed step size strategy for various step size choices. Large step size strategy (red) is initially very fast, but is not able to achieve fine convergence. Small step size (blue) is slow to converge but has a smaller asymptotic bias. LAPS adaptively reduces the step size, combining the best of all worlds.}
    \label{fig: stepsize}
\end{figure}

\section{Equipartition conditon} \label{sec: equipartition}

\begin{figure}[t]
    \centering
    \includegraphics[width=\linewidth]{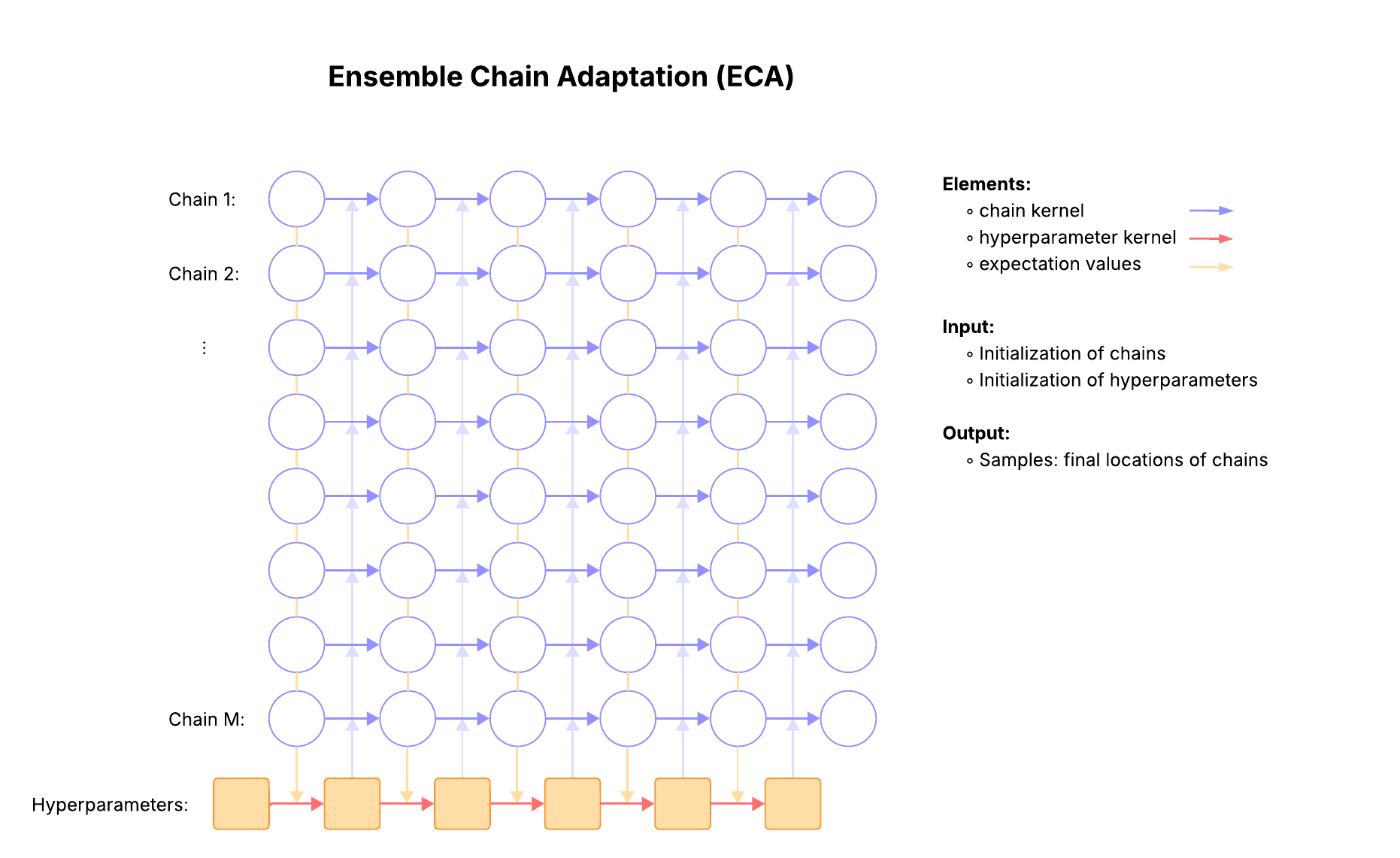}
    \caption{Schematic of the Ensemble Chain Adaptation (ECA) algorithm that we adopt. $M$ chains are evolved, time goes in the horizontal direction. In each step, the locations of all the chains are used to compute expectation values (orange arrow) which are used to update the hyperparameters (red arrow). Updated hyperparameters specify the kernel (pale purple arrow) which updates the chains in parallel (solid purple arrows).}
    \label{fig:eca}
\end{figure}
\subsection{Computation}

\paragraph{Full rank.} Naively computing $\widetilde{D}$ from Equation \eqref{eq: equipartition} requires $\mathcal{O}(d^2 M)$ operations or $\mathcal{O}(d M^2)$ by some rearranging of the sums. This is prohibitively expensive. However, it can be reduced to $\mathcal{O}(d M)$ (so $\mathcal{O}(d)$ per chain) by the Hutchinson's trick \citep{hutchinson1989stochastic}: a vector of random variables $\Z$ with $\mathbb{E} [z_i z_j ] = \delta_{ij}$ is introduced and inserted it in the trace to obtain
\begin{equation}
    \widetilde{D} = \frac{1}{d} \sum_{i, k, l = 1}^{d} \mathbb{E}[ (I-E)_{ik} z_k (I-E)_{i l} z_l ] \approx \mathbb{E}\bigg{[} \bigg{\vert} \frac{1}{d M}\sum_{m = 1}^M \boldsymbol{z} -  (\grad \mathcal{L}(\x^{m}) \cdot \boldsymbol{z}) \x^{m}  \bigg{\vert}^2 \bigg{]},
\end{equation}
where the ECA approximation \eqref{eq: ECA} was used in the second step (which is bias-free and with variance decaying as $1/M$). Two popular choices for the $\Z$ distribution are Gaussian distribution $\Z \sim \mathcal{N}(0, I)$ and the Rademacher distribution, we pick the latter.
We have checked that using 100 realizations of the $\Z$ variable ensures sub-percent error in the $\widetilde{D}$ computation.

\paragraph{Diagonal.} Run-time overhead of the above calculation is negligible, for all but the simplest problems, because it only requires minimal interaction between the chains (computing 100 expectation values). This is typically negligible compared to the gradient evaluation cost. However, it does introduce the cost of storing 100 $d$-dimensional $\boldsymbol{z}$ vectors. In some applications this is prohibitive, as the model computation already takes the entire GPU. We therefore also consider a simplification of $\widetilde{D}$ which only considers the diagonal terms of the equipartition matrix:
\begin{equation}
    \widetilde{D}_{\mathrm{diag}} = \frac{1}{d} \sum_{i = 1}^d (1 - V_{ii})^2.
\end{equation}
Evaluating $\widetilde{D}_{\mathrm{diag}}$ does not introduce the memory overhead. We will therefore use $\widetilde{D}_{\mathrm{diag}}$ as default. In Table \ref{table fullrank} we show that the performance does not degrade relative to the full rank equipartition matrix.

\subsection{Divergence property}

We will here show that $\widetilde{D}(p, p')$ is a divergence on the space of zero-mean Gaussian distributions, meaning that it is non-negative and positive if and only if $p = p'$. Similarly $\widetilde{D}_{\mathrm{diag}}$ is a divergence on the space of zero-mean uncorrelated Gaussian distributions. Zero-mean Gaussian distributions are completely determined by their covariance matrix $\Sigma$, we will denote them by $\mathcal{N}(\x \vert \Sigma)$. The equipartition matrix in this case simplifies to:
\begin{align} \nonumber
    V_{ij}(\mathcal{N}, \mathcal{N}') = \int x_i \partial_j \big(\frac{1}{2} \x^T \Sigma^{-1} \x \big) \mathcal{N}(\x \vert \Sigma') d \x
    = [\Sigma' \Sigma^{-1}]_{ij},
\end{align}
meaning that
\begin{align}
    \widetilde{D}(\mathcal{N}, \mathcal{N}') &= \frac{1}{d} \Tr{(I - \Sigma' \Sigma^{-1})(I - \Sigma' \Sigma^{-1})^T} = \frac{1}{d} \Vert I - \Sigma' \Sigma^{-1}\Vert_F^2\\ \label{eq:D2}
    \widetilde{D}_{\mathrm{diag}}(\mathcal{N}, \mathcal{N}') &= \frac{1}{d} \sum_{i=1}^d (1 - \Sigma'_{ii} / \Sigma_{ii})^2,
\end{align}
where $\Vert \cdot \Vert_F$ is the Frobenious norm.
$\widetilde{D}(\mathcal{N}, \mathcal{N}')$ is evidently positive (because the norm is positive) and zero if and only if $I - \Sigma' \Sigma^{-1} = 0$, which is equivalent to $\Sigma = \Sigma'$. Covariance matrix uniquely parameterizes the zero-mean Gaussian distribution so this demonstrates that $\widetilde{D}$ is a divergence on this space.
Similarly Equation \eqref{eq:D2} is a sum of positive terms, so it is zero if and only if all of the terms are zero, which is equivalent to $\Sigma'_{ii} = \Sigma_{ii}$ for all $i$. Thus, $\widetilde{D}_{\mathrm{diag}}$ is a divergence on the space of uncorrlated Gaussian distributions, because
$\{ \Sigma_{ii} \}_{i=1}^d$ uniquely determine the uncorrelated Gaussian distributions.

We note that Equation \eqref{eq:D2} coincides with the covariance matrix error as defined in \citep{robnik_black-box_2024}.


\newpage
\section{LAPS details} \label{sec: alg}

We here provide some further details and clarifications of the LAPS algorithm. 

\paragraph{Graphical illustration.} Hyperparameters in LAPS are updated by taking averages over the chains. This is illustrated in Figure \ref{fig:eca}.

\textbf{Pseudocode} is shown in Algorithm \ref{alg}. The first while loop corresponds to the unadjusted phase, and the second to the adjusted phase. The MCLMC dynamics updates for different chains are independent and can be executed in parallel. The hyperparameters are then updated using expectation values over the ensemble.

\begin{algorithm}
   \caption{LAPS Pseudocode}
   \label{alg}
\begin{algorithmic}
   \STATE {\bfseries Input:}
       \STATE differentiable unnormalized log-probability density $\log p(\x)$,
       \STATE initial particle positions $\{ \x^m\}_{m=1}^M$,
       \STATE initial particle velocities $\{ \U^m\}_{m=1}^M$
          
   \STATE {\bfseries Output:} samples $\{ \x^m\}_{m=1}^M$.
    \STATE
  \STATE $\epsilon \gets 0.01 \sqrt{d}$
   \STATE $L \gets$ Equation \eqref{eq: L} ($\{ \x^m\}$)
   \STATE $t \gets 0$
   \REPEAT
        \STATE $\x^m, \U^m, \Delta^m \gets $ MCLMC update \eqref{eq: discrete update}$(\x^m, \U^m, \epsilon, L)$
        \STATE $\mathrm{EEVPD} = \frac{1}{M} \sum_{i=1}^M (\Delta^m)^2 / d$ 
        \STATE $\widetilde{D} \gets$ Equation \eqref{eq: equipartition}$(\{ \x^m \})$
        \STATE $\mathrm{EEVPD}_{\mathrm{wanted}} \gets F(C \widetilde{D})$, see Equation \eqref{eq: bias bound}.
        \STATE $\epsilon \gets \epsilon \big(\mathrm{EEVPD}_{\mathrm{wanted}} /  \mathrm{EEVPD})^{1/6}$
       \STATE $L \gets$ Equation \eqref{eq: L} ($\{ \x^m\}$) 
       \STATE $\delta[x_i^2] \gets$ Equation \eqref{eq: relative fluctuations} ($\{ \x^m\}$) 
       \STATE $t \gets t+1$

    \UNTIL{$\max_i \delta[x_i^2] > 0.01$ and $t < \mathrm{maxiter}$}
   
    \STATE $\mathrm{ADAPT} \gets \mathrm{True}$
    \WHILE{$t < \mathrm{maxiter}$}
        \STATE $\x^m, \U^m, a^m \gets $ MAMS update  $(\x^m, \epsilon)$, see Section \ref{sec: MAMS}.
        \STATE $a = \frac{1}{M} \sum_{i=1}^M a^m$ 
        \STATE $\mathrm{ADAPT} \gets \mathrm{ADAPT} \, \wedge \vert a - a_{targeted}\vert > 0.03$
        \IF{ADAPT}
        \STATE $\epsilon \gets$ Bisection $(\epsilon, a - a_{\mathrm{targeted}})$
        \ENDIF
    \ENDWHILE
\end{algorithmic}
\end{algorithm}

\begin{figure}
    \centering
    \includegraphics[width=\linewidth]{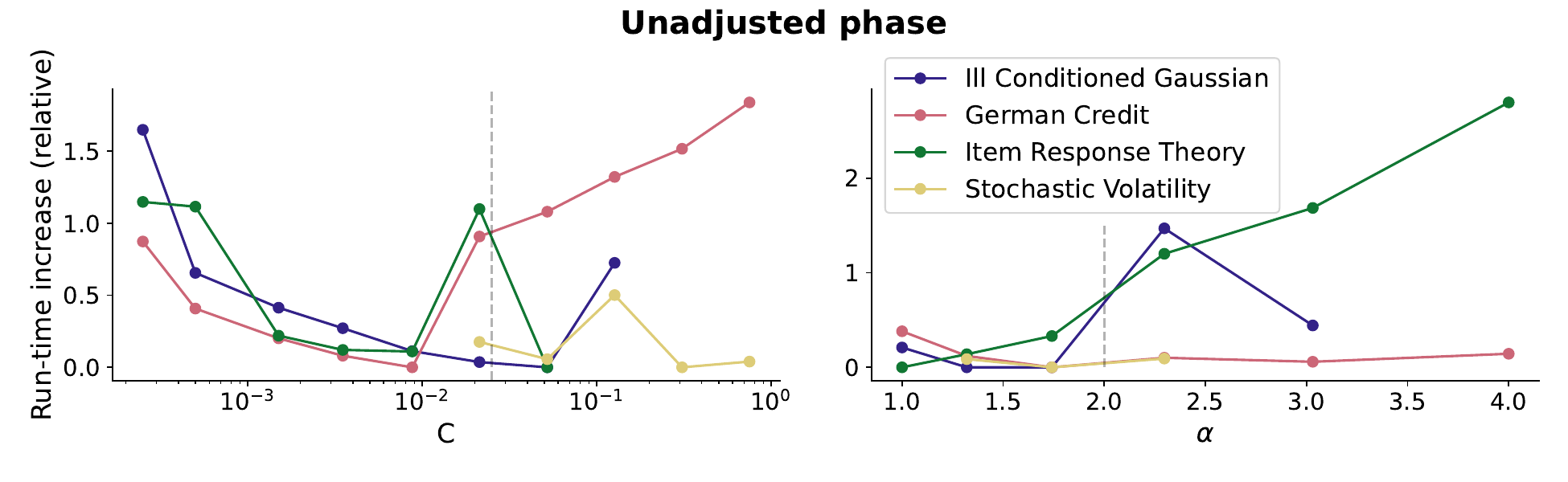}
    \caption{Ablation study of the unadjusted phase hyperparameters $C$ and $\alpha$. Run-time increase (as measured by the number of gradient calls to low bias, $b^2_{\mathrm{max}} < 0.01$) is measured relative to the optimal performance in this grid search. The default hyperparameter values are shown by vertical dashsed lines.}
    \label{fig:ablations1}
\end{figure}


\paragraph{Memory complexity} The memory complexity of the algorithm is $\mathcal{O}(d)$ per chain, where $d$ is the dimensionality of the parameter space on which the inference is performed. Complexity is independent of the number of steps.

\paragraph{Architecture.} The experiments were run on the NVIDIA A100 GPU (40GB).

\paragraph{Initialization} The initial positions of the chains are provided by the user. In the experiments in this work, we initialize them by drawing samples from the prior distribution, as defined by the Inference Gym models.
We initialize the velocity to be aligned with the gradient (and it has unit norm in MCLMC). 

\section{Ablation studies} \label{sec: ablations}

\begin{figure}
    \centering
    \includegraphics[width=\linewidth]{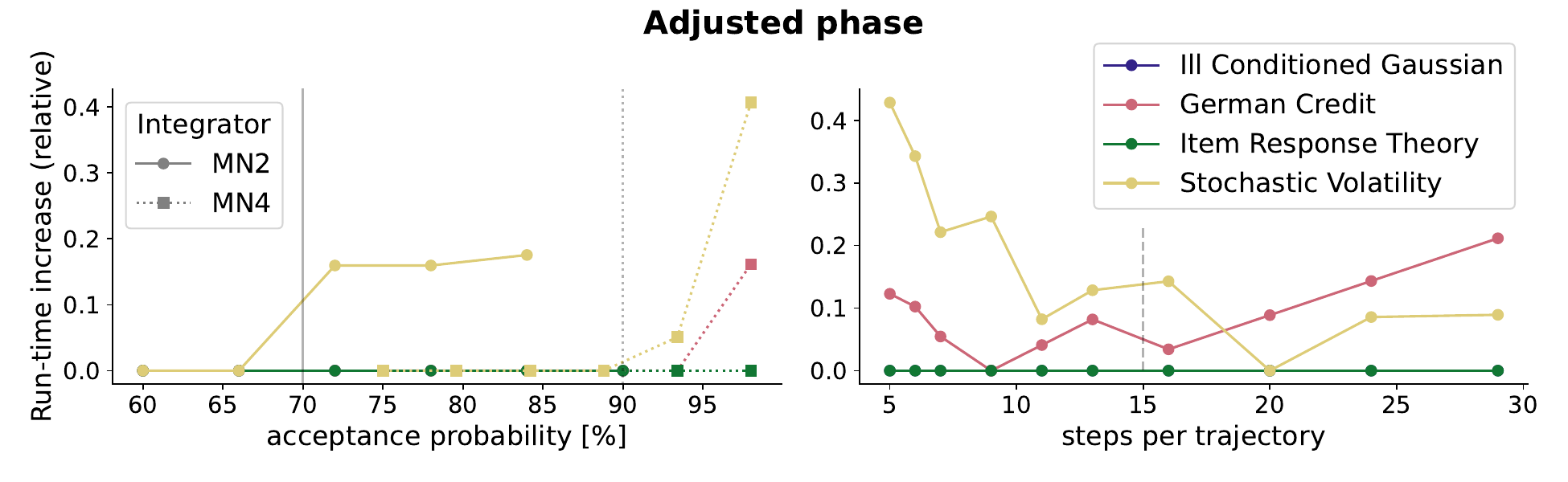}
    \caption{
    Ablation study of the adjusted phase hyperparameters acceptance probability and the number of steps per trajectory. Run-time increase (as measured by the number of gradient calls to low bias, $b^2_{\mathrm{max}} < 0.01$) is measured relative to the optimal performance in this grid search. The default hyperparameter values are shown by vertical dashsed lines. Acceptance rate results are also shown as a function of the integrator used (fourth order MN4 versus second order MN2). 
    }
    \label{fig:ablations2}
\end{figure}

\begin{figure}
    \centering
    \includegraphics[width=0.9\linewidth]{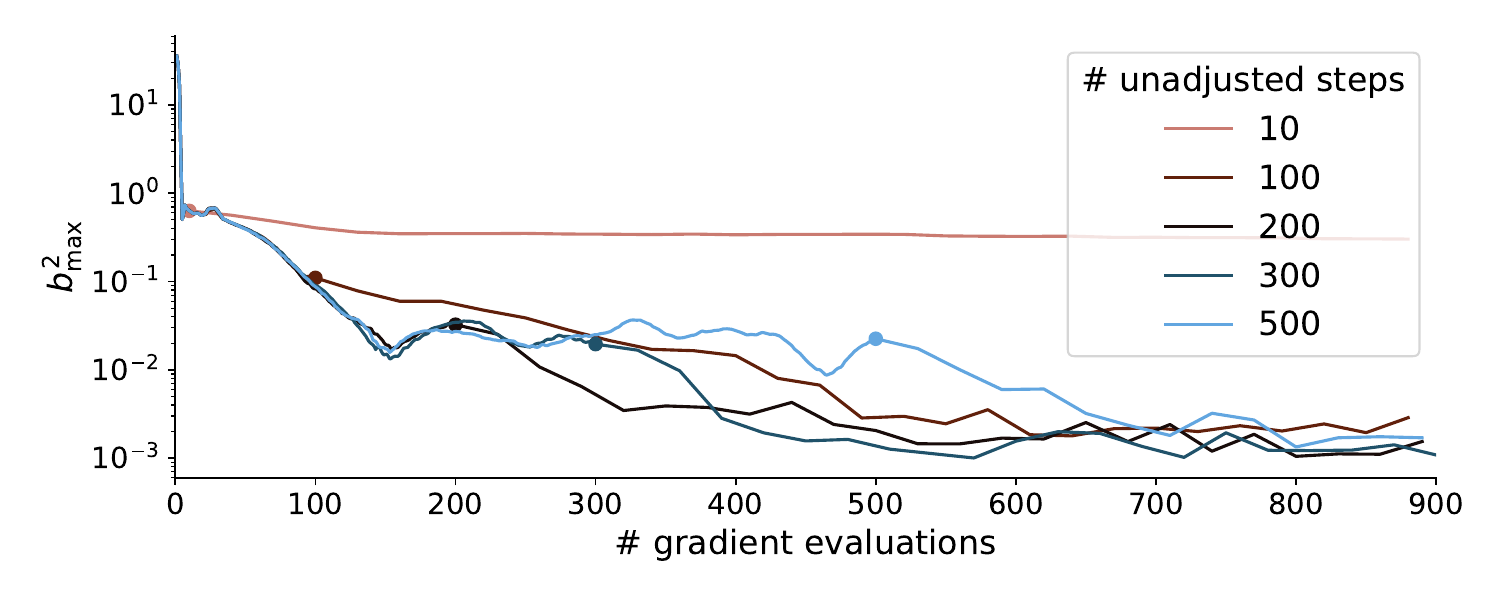}
    \caption{Dependence of the performance on the number of unadjusted steps. Bias as a function of the number of total gradient calls is shown for the German Credit problem. Different colors correspond to different number of steps in the unadjusted phase. The switch to adjustment is denoted by a dot. The adjusted method on its own (10 unadjusted steps) struggles with the cold start and does not converge in a reasonable time. If a sufficient number of unadjusted steps is taken (100) the method converges, but the adjusted part takes an unnecessarily long time. On the other hand the unadjusted method on its own does not converge to the extremely low bias, which is only achieved once the adjustment is turned on, as can be when 500 unadjusted steps are taken (light blue). In LAPS, the switch to adjustment is determined automatically by monitoring the summary statistics and in this case occurs after 200 gradient calls, which is close to optimal.}
    \label{fig:n1}
\end{figure}

\begin{figure}
    \centering
    \includegraphics[width=\linewidth]{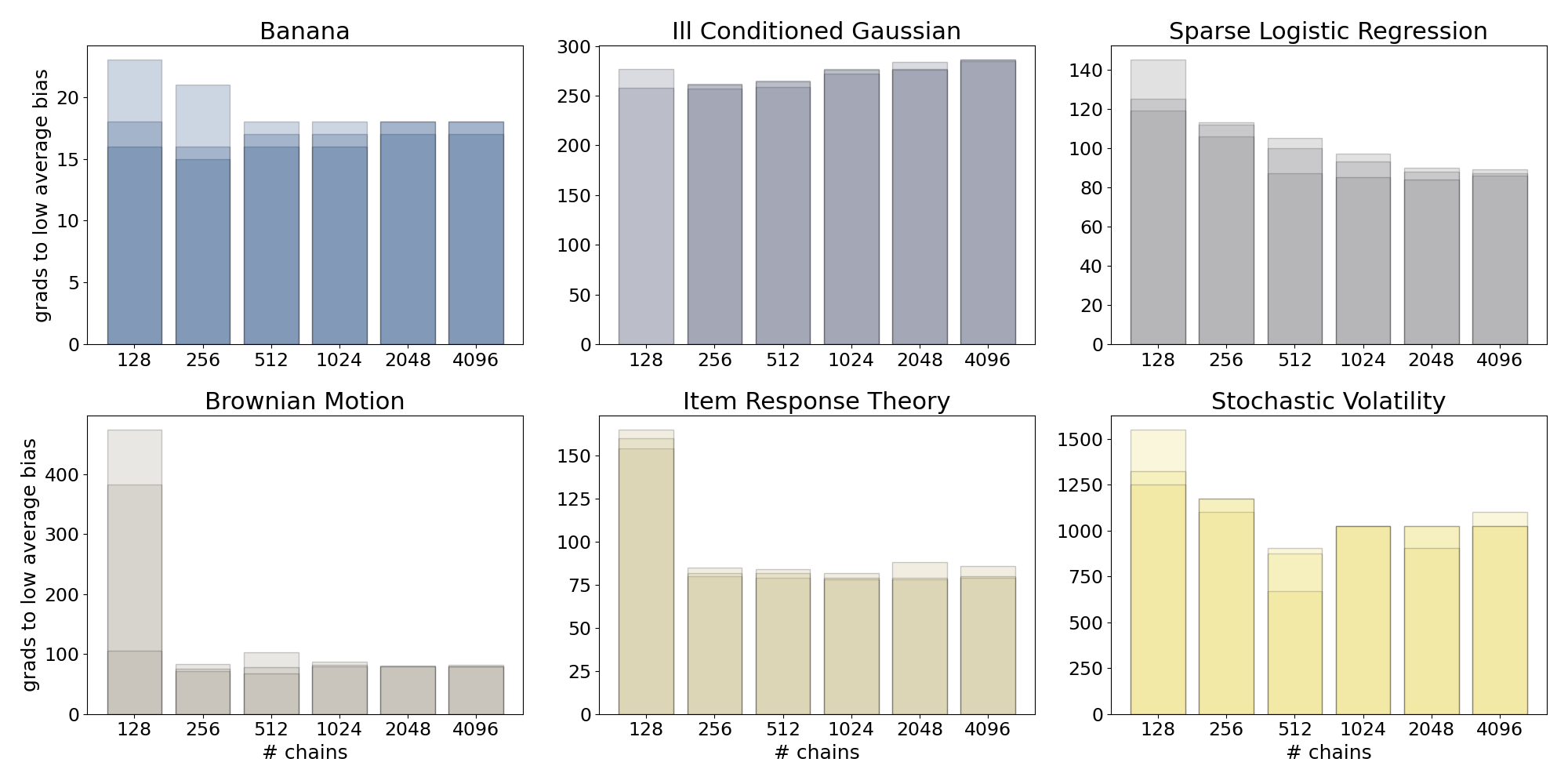}
    \caption{Number of gradient calls to low bias ($b_{\mathrm{avg}}^2 < 0.01$) as a function of number of chains $M$. Different color shades correspond to runs with different random seeds. It can be seen that performance stays roughly constant as long as $M > 128-256$, depending on the target.}
    \label{fig:chains}
\end{figure}

We here study the performance of the algorithm as a function of various algorithmic choices.

\paragraph{Unadjusted phase.} We study the performance as a function of $C$ (ratio between the asymptotic bias and the total bias) and $\alpha$ (the proportionality constant relating the typical posterior size and the momentum decoherence scale $L$). The results are shown in Figure \ref{fig:ablations1}. These results demonstrate that the performance is stable in a wide range of hyperparameters. Results for some of the models could be improved for some of the models (German Credit and Item Response Theory) by a factor of two if a smaller value of $C$ was used.

\paragraph{Adjusted phase.} We study the performance as a function of the acceptance probability and the number of steps per trajectory. The results are shown in Figure \ref{fig:ablations2}. The acceptance probability ablations are shown for both MN2 and MN4 integrators, demonstrating that the choice of the integrator does not play a large role. Figure \ref{fig:ablations2} shows that the performance is very stable with respect to the adjusted phase hyperparameters. For some models (Ill Conditioned Gaussian and Item Response Theory) the unadjusted phase already achieved close to desired accuracy, so these hyperparameters practically do not play a role. In all cases the change of performance in the reasonable range of hyperparameters is on the order of $10\%$.

\paragraph{Is the adjusted phase necessary?} We study how does the time when we switch on adjustment impact performance. This is shown in Figure \ref{fig:n1} for the German Credit problem. Turning on adjustment too soon dramatically slows down convergence, as expected. Unadjusted sampler on the other hand reaches a stationary state at some point and does not improve further. Adjusted method is needed for the final decay to extremely low error (determined by the variance due to the finite number of chains). Without the adjusted phase the sampler fails to achieve low error for some problems, for example for the Stochastic Volatility model.

\paragraph{Is the unadjusted phase necessary?} 
We compare LAPS with and without the unadjusted phase. The algorithm with only the adjusted phase is a parallel version of the Metropolis adjusted microcanonical dynamics (MAMS). We explore two options, with initializing the sampling with 100 ADAM steps (as in \citep{hoffman_tuning-free_2022}) or without. In both cases we give MAMS an (unfair) advantage of giving it the same diagonal mass matrix preconditioning that LAPS would use in the adjusted phase (i.e. the one obtained from samples after the LAPS unadjusted phase), so that we do not need to develop a custom adaptive preconditioner. We have checked that this produces better results than omitting the preconditioning and we expect it to be better than the custom preconditioner (which would be based on more biased samples in the initial steps). The results are shown in Table \ref{table: mams}. As can be seen, the unadjusted initialization (LAPS) gives a much more stable performance and is in many problems the only method that converges. When all methods converge, LAPS is typically significantly faster, the only exception is German Credit when performance of all methods is similar. For example, LAPS is a factor of 18-23 times faster than MAMS on Banana.
This results suggest that both adjusted and unadjusted phases are critical for the good performance.

\paragraph{Number of chains} We study the performance as a function of the number of chains $M$. The results are shown in Figure \ref{fig:chains}. The performance is roughly independent of the number of chains, as long as $M > 128-256$, depending on the target. Lower $M$ cannot work, because the number of samples obtained by LAPS is equal to $M$. Even if these are perfectly independent, unbiased samples, their accuracy is limited by the variance. $b_{\mathrm{avg}}^2 = 0.01$ accuracy roughly corresponds to 100 effective samples \citep{hoffman_tuning-free_2022}, so a lower number of chains typically cannot achieve this performance.

\paragraph{Equipartition estimation}
In the rest of the work we used the diagonal version of $\widetilde{D}$ as bias proxy in the unadjusted phase. Here we explore if this degrades the performance relative to the full rank $\widetilde{D}$. Table \ref{table fullrank} compares the the number of gradient calls to low bias ($b_{\mathrm{max}}^2 < 0.01$) for both approaches. It demonstrates that on the problems considered here, the differences are marginal. Note that this also holds for the Ill Conditioned Gaussian, whose covariance matrix is randomly orientated and therefore is not biased to be close to diagonal. This suggests that there is no need to use the full rank bias proxy.

\paragraph{Dynamics}
We tried replacing the microcanonical dynamics with the more standard underdamped Langevin dynamics. We found that the unadjusted underdamped Langevin dynamics fails to aid the sampler and the scheme is significanlty slower than LAPS with the microcanonical dynamics. For example, LAPS scheme with underdamped Langevin converges to
in 650 gradient calls per chain on the Item response theory. Only running the adjusted underdamped dynamics (i.e. MALT) converges in 610 gradient calls, so comparable. It is also comparable to HMC with ChESS (537 gradient calls) and underdamped Langevin with MEADS (790 gradient calls). LAPS with the microcanonical dynamics on the other hand converges significantly faster (185 gradient calls). The conclusion is that microcanonical dynamics is important for the success of our scheme.

\begin{table}
\caption{Grads to low max bias ($b_{\mathrm{max}}^2 = 0.01$), normalized per chain, lower is better. - indicates that sampler did not converge after 2000 gradient calls. LAPS is compared with only the adjusted phase, i.e., the parallel version of MAMS. MAMS is shown to be less reliable and often fails to converge in reasonable time.}
\label{table: mams}
\vskip 0.15in
\begin{center}
\begin{small}
\begin{sc}
\begin{tabular}{lcccr}
    \toprule & MAMS & MAMS with ADAM & LAPS \\\midrule
    Banana & 390 & 670 & 17 \\
    Ill conditioned Gaussian & - & - & 308 \\
    German Credit & 270 & 280 & 300 \\
    Item Response Theory & - & 1000 & 185 \\
    Stochastic Volatility & - & - & 1325 \\
\bottomrule
\end{tabular}
\end{sc}
\end{small}
\end{center}
\vskip -0.02in
\footnotesize{}
\end{table}

\begin{table}
\caption{Ratio of the number of gradient calls to low bias for the full rank equipartition versus the diagonal one. The differences are marginal, suggesting that diagonal version, which is less memory intensive, may be used.
}
\label{table fullrank}
\vskip 0.15in
\begin{center}
\begin{small}
\begin{sc}
\begin{tabular}{lcccccr}
    \toprule & Gaussian & German Credit & IRT & Stoch. Volatility\\\midrule
     Relative run-time (full rank / diagonal) & 1.003 & 0.954 & 1.102 & 1.000 \\
\bottomrule
\end{tabular}
\end{sc}
\end{small}
\end{center}
\end{table}

\section{Integrators} \label{sec: integrator}

In this appendix we review the numerical integrators used to approximate Equation \eqref{eq: SDE}. We employ operator splitting schemes that alternate between deterministic and stochastic updates \citep{leimkuhler_molecular_2015}. The resulting algorithms are time-reversible, stable, and suitable for use inside a Metropolis–Hastings wrapper \citep{robnik_metropolis_2025}.

First, we split the SDE in a stochastic and determinisitc part and approximate the combined solution by an Euler-Maruyama composition,
\begin{equation} \label{eq: discrete update}
   \varphi(\cdot \vert \epsilon, L) 
    = \Phi^O_{\epsilon/2, L} \circ \Phi_{\epsilon} \circ \Phi^O_{\epsilon/2, L}.
\end{equation}
Here, $\epsilon$ is the step size, $\Phi^O$ is the stochastic update, and $\Phi_{\epsilon}$ is the deterministic update.  

\begin{table}
\label{integrators}
\centering
\caption{Integrators considered in this work. The integrator is defined by its coefficients, which are given in the last column. The coefficients that can be inferred from the symmetric structure or normalization are not explicitly provided.}
\begin{tabular}{lcccc}
\toprule 
Name & Order & Gradients/step & Best for & Coefficients \\
\midrule 
LF & 2 & 1 & low accuracy & 
$b_1 = 0.5$ \\
\hline
MN2 & 2 & 2 & high accuracy, low dimensionality & 
   $\begin{aligned}
        b_1 &= 0.1931833275 \\
        a_1 &= 0.5
    \end{aligned}$
\\ \hline
MN4 & 4 & 5 & high accuracy, high dimensionality &
    $\begin{aligned}
        b_1 &= 0.0839831526 \\
        b_2 &= 0.6822365335 \\
        a_1 &= 0.2539785108 \\
        a_2 &= -0.032302867
    \end{aligned}$
\\
\bottomrule
\end{tabular}
\end{table}

The deterministic update is obtained by further splitting the differential equation to the part that updates the velocity and the part that updates the position. The exact solutions of these two equations are called the position update $\Phi^A_{\epsilon}$ and the velocity update $\Phi^B_{\epsilon}$. The combined solution is approximated by alternating these two updates:
\begin{equation} \label{eq: integrator}
    \Phi_{\epsilon} = 
    \Phi^B_{b_1 \epsilon}\,
    \Phi^A_{a_1 \epsilon}\,
    \Phi^B_{b_2 \epsilon}\,
    \Phi^A_{a_2 \epsilon}\,\cdots\,
    \Phi^B_{b_K \epsilon}.
\end{equation}
Different integrators correspond to different choices of coefficients \(\{a_k, b_k\}\). In all cases we require the normalization $\sum_{k=1}^K b_k = \sum_{k=1}^{K-1} a_k = 1$ and the symmetric composition (\(a_k = a_{K-k}\), \(b_k = b_{K+1-k}\)), because it ensures time-reversibility and cancels odd-order terms in the local error expansion, thereby improving accuracy.  
Specifically, we consider three schemes, summarized in Table \ref{integrators}. Each scheme trades accuracy for computational cost in a different way:
\begin{itemize}
    \item Leapfrog (LF; \citet{leimkuhler_molecular_2015}) requires only 1 gradient evaluation per step. It is preferred when low accuracy is sufficient, i.e. in the unadjusted phase of the algorithm.  
    \item Second order minimal norm integrator (MN2; \citet{omelyan_optimized_2002, omelyan_symplectic_2003}) is optimized for accuracy, and comes with two gradient evaluations per step. This makes it the preferred choice in the adjusted phase of the algorithm, when we are trying to achieve fine convergence.
    \item Fourth order minimal norm integrator (MN4; \citet{omelyan_optimized_2002, omelyan_symplectic_2003}) is more costly, as it requires five gradient calls per step. However, unlike the other two integrators, it is fourth order accurate, meaning that its local approximation error scales as $\epsilon^4$. This means that reducing the step size has a much larger impact on the accuracy than for the second order integrators. This is crucial for the adjusted phase if the target is high-dimensional \citep{takaishi_testing_2006, hernandez-sanchez_higher_2021}. We adopt it for $d > 200$.
\end{itemize}

For completeness, we below explicitly state the $\Phi^O$, $\Phi^A$ and $\Phi^B$ updats.

\paragraph{Stochastic update.} We follow \citep{robnik_black-box_2024} and write the stochastic update as 
\begin{equation} \label{eq: O update}
    \Phi^O_{\epsilon, L}(\x, \U) 
    = \left( \x, \,
    \frac{ c_1 \U + c_2 \boldsymbol{Z}}
         { \|c_1 \U + c_2 \boldsymbol{Z}\| } \right),
\end{equation}
where \(\boldsymbol{Z} \sim \mathcal{N}(0, I/\sqrt{d})\) is a $d$-dimensional vector, \(c_1 = e^{-\epsilon/L}\), and \(c_2 = \sqrt{1-c_1^2}\).

\paragraph{Position update.}  
At fixed velocity, the dynamics from Equation \eqref{eq: SDE} reduces to
\[
    \frac{d}{dt} \x_t = \U,
\]
which has the trivial solution
\begin{equation}
    \Phi^A_\epsilon: \quad (\x, \U) \mapsto (\x + \epsilon \U, \U).
\end{equation}
The energy change induced by this update is $\Delta(\x, \U \vert p, \epsilon) = - \log p(\x + \epsilon \U) + \log p(\x)$
\paragraph{Velocity update.}  
At fixed position, the deterministic part of the dynamics from Equation \eqref{eq: SDE} reduces to
\[
    \tfrac{d}{dt} \U_t = (I - \U_t \U_t^T)\,\frac{\nabla \log p(\x)}{d-1}.
\]
This is the nonlinear Riccati vector equation. However, it can be reduced to a second-order linear ODE by introducing the variable \(\boldsymbol{y}_t\) via \(\U_t = \tfrac{d}{dt}{\boldsymbol{y}_t} / (\boldsymbol{g}\cdot \boldsymbol{y}_t)\). One obtains \citep{steeg_hamiltonian_2021}
\[
    \tfrac{d^2}{dt^2}{\boldsymbol{y}} = (\boldsymbol{g} \boldsymbol{g}^T)\boldsymbol{y},
\]
which is a linear system with analytic solution. We have denoted $\boldsymbol{g} = \frac{\nabla \log p(\x)}{d-1}$ and $\boldsymbol{e} = \frac{\boldsymbol{g}}{\|\boldsymbol{g} \|}$.
Solving the linear equation and translating back to the $\U$ variable, one arrives at the closed-form update
\begin{equation} \label{eq: B MCHMC}
    \Phi^B_\epsilon: \quad 
    (\x, \U) \mapsto (\x, \frac{\U + \big(\sinh \delta + (\boldsymbol{e}\cdot \U)(\cosh \delta -1)\big)\boldsymbol{e}}
                      {\cosh \delta + (\boldsymbol{e}\cdot \U)\sinh \delta}),
\end{equation}
where \(\delta = \epsilon \|\nabla \log p(\x)\|/(d-1)\).  
Thus the velocity rotates smoothly in the plane spanned by \(\U\) and \(\boldsymbol{g}\), while maintaining unit norm.  
The energy change induced by this update is \citep{steeg_hamiltonian_2021}:
\begin{equation}
    \Delta(\x, \U \vert p, \epsilon) = (d-1) \log \Big\{ \cosh \delta + (\boldsymbol{e} \cdot \U)\sinh \delta \Big\}.
\end{equation}
    
\section{Metropolis adjusted microcanonical dynamics} \label{sec: MAMS}

In this appendix we review the Metropolis-adjusted microcanonical dynamics (MAMS; \citet{robnik_metropolis_2025}) that LAPS uses in the adjusted phase of the algorithm. 
MAMS kernel is constructed by the following steps:
\begin{enumerate}
    \item Start from $(\x, \U)$. 
    \item Resample $\U$ from a uniform distribution on the unit sphere. Specifically $\U = Z / \| Z\|$, with $Z \sim \mathcal{N}(0, I)$.
    \item Apply $N$ steps of the splitting integrator with step size $\epsilon$, including partial velocity refreshment at each step, to obtain $(\x', \U')$.  
    \item Compute the energy error $\Delta$ accumulated in this integration. Each position update $\x \xrightarrow[]{} \x'$ contributes $- \log p(\x') / p(\x)$ to the energy error. Each velocity update contributes
    $(d-1) \log \Big\{ \cosh \delta + (\boldsymbol{e} \cdot \U)\sinh \delta \Big\}$,
    where $\delta = \epsilon \|\nabla \log p(\x)\|/(d-1)$ and $\boldsymbol{e} = \nabla \log p(\x)/\|\nabla \log p(\x)\|$, see Section \ref{sec: integrator}.
    \item Accept the proposal with probability $\min(1, e^{-\Delta})$, otherwise remain at $(\x, \U)$.
\end{enumerate}
The stationary distribution of this kernel is exactly the target distribution, the integrator approximation error is eliminated by the   Metropolis–Hastings (MH) step.





\end{document}